\newtheorem{assumption}{\bf Assumption}
\newtheorem{lemma}{\bf Lemma}
\newtheorem{proposition}{\bf Proposition}
\newtheorem{theorem}{\bf Theorem}
\newtheorem{remark}{\bf Remark}
\title{\LARGE \bf Parameter Critic: a Model Free Variance \\ Reduction Method Through Imperishable Samples}
\author{Juan~Cervi\~no, Harshat~Kumar, and Alejandro~Ribeiro%
\thanks{Electrical and Systems Engineering, University of Pennsylvania. \texttt{\small \{jcervino, harshat, aribeiro\}@seas.upenn.edu}}%
}
\begin{document}
\bstctlcite{IEEEexample:BSTcontrol}



\maketitle

\begin{abstract}
We consider the problem of finding a policy that maximizes an expected reward throughout the trajectory of an agent that interacts with an unknown environment. Frequently denoted Reinforcement Learning, this framework suffers from the need of large amount of samples in each step of the learning process.  To this end, we introduce parameter critic, a formulation that allows samples to keep their validity even when the parameters of the policy change. In particular, we propose the use of a function approximator to directly learn the relationship between the parameters and the expected cumulative reward. Through convergence analysis, we demonstrate the parameter critic outperforms gradient-free parameter space exploration techniques as it is robust to noise. Empirically, we show that our method solves the cartpole problem which corroborates our claim as the agent can successfully learn an optimal policy while learning the relationship between the parameters and the cumulative reward. 
\end{abstract}


\section{Introduction} \label{sec_intro}
Consider the problem of maximizing the performance of an agent interacting with the environment. Reinforcement Learning (RL) provides a framework in which the agent can learn the best action to take by subsequently receiving rewards from the environment with which it interacts \cite{sutton2018reinforcement}. By knowing its state, the agent chooses an action, reaches a new state, and receives a reward. Specifically, the agent's objective is to maximize the cumulative reward throughout a trajectory of states and actions. Due to its success, RL has become an ubiquitous tool that has shown successful results in autonomous driving \cite{isele2018navigating,shalev2016safe}, robotics \cite{he2017deep}, communications \cite{peters2003reinforcement} and many others.

Despite its success, one drawback of RL remains in the inefficient use of data. In short, most RL algorithms have in their core an iterative method that improves the expected discounted returns of the policy. The update step is done by deciding the direction of improvement based on a gradient taken conditioning an expectation on the value of the given policy\cite{sutton2000policy,silver2014deterministic}. These methods are able to reduce variance (e.g. Montecarlo methods), however all the samples obtained in an iteration become useless on the following one. This is a consequence of conditioning on the value of the parameters, which, once they are updated on the learning step, the samples remain futile thereafter.



In this work we introduce \emph{parameter critic}, a function approximator which learns the relationship between the parameters of the agent and the expected reward or objective function. Gradient free optimization is then used directly on the learned function \cite{ghadimi2013stochastic, vemula2019contrasting}. Older samples therefore do not become obsolete, as they are noisy evaluations of the objective function which are used to fit the parameter critic. These \emph{imperishable} samples contrast the samples used to update the $Q$-function in actor-critic methods which are conditioned on the actor parameter, and therefore lose their merit once the actor parameter changes. 

Apart from introducing the parameter critic, there are two other main contributions from this work. 
First, we establish finite sample complexity bounds for the proposed algorithm which encompasses all categories of reinforcement learning problems (i.e. finite / infinite horizon, finite / continuous state-action space, stochastic / deterministic policy parametrization, non-linear parametrizations). This analysis captures the relationship between function approximation error and the zeroth-order perturbation which points to feasibility in practice.  Finally, using the random horizon rollout trick of \cite{paternain2018stochastic} to obtain unbiased estimates of the expected performance function ($J$), we propose a method to obtain unbiased samples of an infinite horizon objective. This connection enables the the extension of similar zeroth-order methods to infinite horizon problems\cite{vemula2019contrasting, mania2018simple}.

Our results are corroborated numerically on two separate problems. First, consider an example in one dimension to both build intuition on the parametric critic framework as well as highlight the superior performance of our method in noisy settings. Second, we benchmark our scheme on the canonical cart-pole problem, which elucidates a major advantage to parameter space learning. Namely, decreasing the network size results in faster convergence. These implications admit several directions for future work, which are detailed in the conclusion. 

\section{Reinforcement Learning as Black Box Optimization} \label{sec:background}
We consider the reinforcement learning problem (RL), where an agent moves through a state space $\mathcal{S}$ and takes actions in some action space $\mathcal{A}$. After taking an action, the agent transitions to a new state according to an unknown probability $P^{a}_{s\to s'}:= p(s'\vert (s,a) \in \mathcal{S} \times \mathcal{A})$ at which point a reward is revealed by the environment according to the reward function $R:\mathcal{S} \times \mathcal{A} \to \mathbb{R}$. The agent's task is to accumulate as much reward as possible for some possibly infinite set horizon length $\mathcal{H}$. Formally this problem can be encapsulated as a Markov decision process (MDP) as a tuple $(\mathcal{S}, \mathcal{A}, P, R, \gamma)$, where the constant $\gamma \in (0,1)$ is a discount factor which determines how much the future rewards matter to the behavior of the agent. 

In particular, we consider the case where the policy, which can be either stochastic $\pi:\mathcal{S}\to P(\mathcal{A})$ or deterministic $\pi:\mathcal{S}\to \mathcal{A}$, is parametrized by some $\theta \in \Theta \subset \mathbb{R}^p$. For consistency in notation, we let both the stochastic and deterministic action be denoted by $a_t \sim \pi_\theta(s_t)$.  We define the \emph{value} function $V:\mathcal{S} \to \mathbb{R}$ as the expected accumulated rewards throughout a trajectory based on the horizon length,
\begin{align}
    V_H(s) &:= \mathbb{E}_\tau \left[\sum_{t = 1}^H R(s_t, a_t \sim \pi_\theta(s_t))\Big|s_0=s \right], \\
    V_\infty(s) &:= \mathbb{E}_\tau \left[\sum_{t = 1}^\infty  \gamma^{t-1} R(s_t, a_t \sim \pi_\theta(s_t))\Big|s_0=s\right], \\ \nonumber
\end{align}
where the expectation is taken with respect to the trajectory $\tau = \left\{s_0, a_0, s_1, a_1, \dots  \right\}$. The problem of interest is to obtain the parameters $\theta$ that maximize the expected value function $V$. To this end, the objective function $J(\theta)$ that can we written as follows:
\begin{equation} \label{equ:main_problem}
\max_\theta \left[ J(\theta) := \mathbb{E}_{s\sim \rho^0} \left[ V(s) \right] \right],    
\end{equation}
where $\rho^0$ is the initial state distribution. For the finite horizon case, unbiased samples of the objective function can be obtained by sampling a state from the initial state distribution $\rho^0$ and rolling out the system $H$ steps \cite{vemula2019contrasting}. For the infinite horizon case, unbiased samples can also be obtained in a similar fashion, but they additionally require sampling the rollout length from a geometric distribution \cite{paternain2018stochastic, zhang2019global} (see Section \ref{sec:infinite_horizon}). In either case, by rolling out the system, an unbiased sample of the objective function $J$ is obtainable. 

As such, similar to the literature \cite{ruckstiess2010exploring, stulp2013robot}, we choose to characterize the parametrized RL problem as a black box optimization problem, where we seek to maximize the objective function $J(\theta)$ using the noisy unbiased samples of the objective function only. Before we proceed to characterize our proposed solution, we first state some fundamental regularity assumptions.

\begin{assumption} \label{asm:bounded_reward} \textbf{(Bounded Objective and Gradient)} There are $\mathcal{J},\beta > 0$ such that for all $\theta \in \Theta$, $|J(\theta)| \leq \mathcal{J}$ and $|\nabla_{\theta}J(\theta)| \leq \beta$. 
\end{assumption}
\begin{assumption} \label{asm:smoothness}\textbf{(Smoothness)} The objective function $J(\theta)$ is differentiable with respect to $\theta$ over the entire domain, and it is L-Lipshitz and G-smooth. That is, there is a number $L < \infty$ such that for every $\theta_1, \theta_2\in \Theta$,
$$\|J(\theta_1) - J(\theta_2)\| \leq L\|\theta_1 - \theta_2\|,$$
and there is a number $G<\infty$ such that for every 
$\theta_1, \theta_2\in \Theta$,
$$\|\nabla J(\theta_1) - \nabla J(\theta_2)\| \leq G\|\theta_1 - \theta_2\|.$$
\end{assumption}

Assumption \ref{asm:bounded_reward} can be achieved by requiring the reward to be bounded, which is standard for deriving performance guarantees in policy search literature \cite{bhatnagar2008incremental,zhang2019global,castro2010convergent} and is typically held in practice \cite{lillicrap2015continuous}. Assumption \ref{asm:smoothness} is also standard for convergence rate guarantees presented in Section \ref{sec:function_approx}.

\subsection{Zeroth Order Optimization}
In this section, we will overview the zeroth order optimization approach for solving the black box optimization problem described previously. We begin by defining the smoothing parameter $\mu >0$ and Gaussian perturbation $\mathbf{u}\sim \mathcal{N}(0, I_p)$. Then, we approximate the gradient of the objective function $J(\theta)$ with respect to the parameters by obtaining two noise rollouts of the system in opposite directions  $\hat{J}(\theta+\mu \mathbf{u}),\hat{J}(\theta-\mu \mathbf{u})$. The $\mu$-smooth objective function surrogate can be defined by 
\begin{equation}
    J_\mu(\theta) := \mathbb{E}_\mathbf{u}\left[J(\theta + \mu \mathbf{u}) \right].
\end{equation}
Using a two-point evaluation of the original objective function $J(\theta)$, we are able to sample an unbiased stochastic estimate of $\nabla_\theta J_\mu (\theta)$. Namely, we recall the key property from \cite{nesterov2017random} later extended to a wider class of functions in \cite{kalogerias2019zeroth}.

\begin{lemma}\label{lem:zeroth_order_grad} \cite[Lemma 2]{kalogerias2019zeroth} For every $\mu > 0$, the $\mu$-smoothed objective function surrogate $J_\mu$ is differentiable, and its gradient admits the representations
\begin{align}
\nabla_\theta J_\mu(\theta) \equiv & \mathbb{E}_{\mathbf{u}\sim \mathcal{N}(0, I_p)}\left[ \frac{J(\theta + \mu \mathbf{u}) - J(\theta )}{\mu} \mathbf{u} \right] \label{equ:asymm}\\
\equiv &\mathbb{E}_{\mathbf{u}\sim \mathcal{N}(0, I_p)}\left[ \frac{J(\theta + \mu \mathbf{u}) - J(\theta - \mu \mathbf{u})}{2\mu} \mathbf{u} \right].\label{equ:symm}\\
\notag
\end{align}
\end{lemma}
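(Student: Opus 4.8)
The plan is to reduce the whole statement to one Gaussian integral computation and then exploit the symmetry of the standard normal density. First I would write the smoothed surrogate explicitly as a convolution-type integral,
$$J_\mu(\theta) = \frac{1}{(2\pi)^{p/2}} \int_{\mathbb{R}^p} J(\theta + \mu u)\, e^{-\|u\|^2/2}\, du,$$
and then perform the change of variables $y = \theta + \mu u$, so that $du = \mu^{-p}\, dy$, in order to push the entire $\theta$-dependence into the Gaussian kernel:
$$J_\mu(\theta) = \frac{1}{(2\pi)^{p/2}\mu^p} \int_{\mathbb{R}^p} J(y)\, e^{-\|y-\theta\|^2/(2\mu^2)}\, dy.$$
This form is convenient precisely because $J(y)$ no longer depends on $\theta$, so only the kernel has to be differentiated.

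Next I would differentiate under the integral sign. The $\theta$-gradient of the kernel is $\mu^{-2}(y-\theta)\,e^{-\|y-\theta\|^2/(2\mu^2)}$, and reverting the substitution $u = (y-\theta)/\mu$ recovers a clean expectation,
$$\nabla_\theta J_\mu(\theta) = \mathbb{E}_{\mathbf{u}\sim\mathcal{N}(0,I_p)}\left[\frac{J(\theta + \mu\mathbf{u})}{\mu}\,\mathbf{u}\right].$$
From here the two claimed representations follow by elementary manipulations of the Gaussian. Since $\mathbb{E}[\mathbf{u}]=0$, the extra term $\mathbb{E}[\mu^{-1}J(\theta)\mathbf{u}] = \mu^{-1}J(\theta)\,\mathbb{E}[\mathbf{u}]$ vanishes, so subtracting it leaves the expectation unchanged and yields the asymmetric form \eqref{equ:asymm}. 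For the symmetric form \eqref{equ:symm}, I would use that $\mathbf{u}$ and $-\mathbf{u}$ are identically distributed: relabeling $\mathbf{u}\mapsto-\mathbf{u}$ in the display above gives
$$\nabla_\theta J_\mu(\theta) = -\,\mathbb{E}_{\mathbf{u}\sim\mathcal{N}(0,I_p)}\left[\frac{J(\theta - \mu\mathbf{u})}{\mu}\,\mathbf{u}\right],$$
and averaging this identity with the original representation collapses everything to the two-point estimator in \eqref{equ:symm}.

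The main obstacle, and the only step requiring genuine care, is rigorously justifying the interchange of differentiation and integration. This needs an integrable dominating function for the difference quotients of the kernel that is uniform over a neighborhood of $\theta$. Here Assumption \ref{asm:bounded_reward} is essential: the bound $|J(\theta)|\le\mathcal{J}$ lets me majorize the integrand by $\mathcal{J}$ times a factor of the form (polynomial in $y$) times the Gaussian kernel, which is integrable, so the dominated convergence theorem applies and the Leibniz differentiation rule is valid. The same bound, together with the finiteness of all Gaussian moments, guarantees that every expectation written above is finite and well defined, so each of the formal manipulations is legitimate.
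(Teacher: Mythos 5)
Your proof is correct. The paper does not actually prove this lemma---it imports it by citation from \cite[Lemma 2]{kalogerias2019zeroth} (originally Nesterov--Spokoiny)---so there is no in-paper argument to compare against; your derivation (write $J_\mu$ as a Gaussian convolution, differentiate the kernel under the integral, use $\mathbb{E}[\mathbf{u}]=0$ to insert the $J(\theta)$ term, and symmetrize via $\mathbf{u}\mapsto-\mathbf{u}$) is precisely the standard proof given in those references. Your handling of the only delicate point is also right: the bound $|J|\leq\mathcal{J}$ from Assumption \ref{asm:bounded_reward} gives a dominating function that is integrable uniformly over a neighborhood of $\theta$, which justifies both the differentiability claim and the exchange of gradient and expectation.
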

For the remainder of the paper, we will be using the symmetric form of the gradient estimate, i.e. \eqref{equ:symm}. Using this zeroth order approximation of the gradient, stochastic gradient descent can be applied to solve problem \ref{equ:main_problem}. The parameter update takes the form 
\begin{equation}\label{equ:param_update_nocritic}
    \theta_{t+1} = \theta_t + \eta \frac{\hat J(\theta + \mu \mathbf{u}) - \hat J(\theta - \mu \mathbf{u})}{2\mu} \mathbf{u},
\end{equation}
where $\hat J(\theta + \mu \mathbf{u})$ and $\hat J(\theta - \mu \mathbf{u})$ are unbiased samples of the objective function. The analysis of \cite{vemula2019contrasting} admits $\delta$-convergence to a stationary point of $J(\theta)$ with $\mathcal{O}(p^2/\delta^3)$ samples.

\subsection{The Infinite Horizon Case} \label{sec:infinite_horizon}

In this section, we show how to use the random horizon trick to extend black box methods to infinite horizon problems. By selecting a random horizon length $H$ from a geometric distribution parameterized by $1-\gamma$, where $\gamma \in (0,1)$ is the discount factor, \cite[Proposition 2]{paternain2018stochastic} showed that the reward collected is an unbiased estimate of the value function. This procedure is detailed in Algorithm \ref{alg:estimateJ}. 

\begin{algorithm}[h]
	\caption{Infinite Horizon $J$-function Sampler \cite{paternain2018stochastic}}
	\begin{algorithmic} \label{alg:estimateJ}
		\REQUIRE Initial state distribution $\rho^0$, Parameters $\theta$
		\STATE Sample $H \sim \textrm{Geom}(1-\gamma)$, Initialize $\hat{J} \leftarrow 0$
		\STATE Draw $s_0 \sim \rho_0$ 
		\STATE Select initial action $a_0 \sim \pi_\theta(s_0)$
		\FOR{$t = 1, \dots, H - 1$} 
		\STATE Collect Reward $\hat J \leftarrow \hat J + R(s_t,a_t)$ 
		\STATE Advance System $s_{t+1} \sim \mathbb{P}(s'\vert s_t,a_t)$
		\STATE Select Action $a_{t+1} \sim  \pi_\theta(s_{t+1})$
		\ENDFOR
		\STATE Collect Reward $\hat{J} \leftarrow \hat J + R(s_{T}, a_{T})$
	\end{algorithmic}
\end{algorithm}

\begin{proposition}
The infinite horizon $J$-function sampler Algorithm \ref{alg:estimateJ} is unbiased; i.e. $\mathbb{E}[\hat{J}(\theta)]=J(\theta)$.
\end{proposition}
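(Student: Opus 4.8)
The plan is to exploit the independence between the randomly drawn horizon $H$ and the state--action trajectory, and to match the tail probabilities of the geometric distribution against the discount weights $\gamma^{t-1}$ appearing in $V_\infty$. First I would write the output of Algorithm~\ref{alg:estimateJ} explicitly as the \emph{undiscounted} partial sum $\hat J(\theta) = \sum_{t=1}^{H} R(s_t, a_t)$, where $H \sim \mathrm{Geom}(1-\gamma)$ is drawn once, independently of the trajectory $\tau = \{s_0, a_0, s_1, a_1, \dots\}$ generated by $\pi_\theta$ starting from $s_0 \sim \rho^0$. Because $H \perp \tau$, I would condition on $\tau$ and take the expectation over $H$ first.

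The key step is to rewrite the random-length sum as a full series tagged by an indicator, $\sum_{t=1}^{H} R(s_t, a_t) = \sum_{t=1}^{\infty} R(s_t, a_t)\,\mathbb{I}\{t \le H\}$, and then push the expectation over $H$ inside the sum, giving $\mathbb{E}_H\big[\hat J \mid \tau\big] = \sum_{t=1}^{\infty} R(s_t, a_t)\,\mathbb{P}(H \ge t)$. It then remains to compute the geometric tail: with success probability $1-\gamma$ on the support $\{1, 2, \dots\}$ we have $\mathbb{P}(H \ge t) = \gamma^{t-1}$, which is precisely the discount weight in $V_\infty$. Substituting yields $\mathbb{E}_H[\hat J \mid \tau] = \sum_{t=1}^{\infty} \gamma^{t-1} R(s_t, a_t)$.

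Finally I would take the outer expectation over $\tau$ and apply the tower property, so that $\mathbb{E}[\hat J(\theta)] = \mathbb{E}_{s_0 \sim \rho^0} \mathbb{E}_\tau\big[\sum_{t=1}^{\infty} \gamma^{t-1} R(s_t, a_t) \mid s_0\big] = \mathbb{E}_{s_0 \sim \rho^0}[V_\infty(s_0)] = J(\theta)$, which is the claimed unbiasedness.

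The main obstacle is rigorously justifying the interchange of the infinite sum and the expectation in the key step; everything else is bookkeeping. I would settle it with Tonelli's theorem: using the bounded-reward condition underlying Assumption~\ref{asm:bounded_reward} (say $|R| \le R_{\max}$) together with $\sum_{t=1}^{\infty} \gamma^{t-1} = 1/(1-\gamma) < \infty$, the nonnegative double sum obeys $\sum_{t=1}^{\infty} |R(s_t,a_t)|\,\mathbb{P}(H \ge t) \le R_{\max}/(1-\gamma) < \infty$, which legitimizes the swap and simultaneously guarantees $\hat J$ is integrable. A secondary point worth flagging is the geometric convention: the identity $\mathbb{P}(H\ge t)=\gamma^{t-1}$ holds for the support $\{1,2,\dots\}$, whereas a shift to $\{0,1,\dots\}$ would misalign the exponent, so I would fix the convention at the outset to match the $t=1$ lower limit of $V_\infty$.
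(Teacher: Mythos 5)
Your proposal is correct and is essentially the same argument the paper relies on: the paper gives no proof of its own for this proposition, instead deferring to \cite[Proposition 2]{paternain2018stochastic}, whose proof is exactly your computation --- exploit the independence of $H$ from the trajectory, write the random-length sum as $\sum_{t=1}^{\infty} R(s_t,a_t)\,\mathbb{I}\{t\le H\}$, and match the geometric tail $\mathbb{P}(H\ge t)=\gamma^{t-1}$ to the discount weights. Your explicit Tonelli justification and the remark about fixing the support of the geometric distribution to $\{1,2,\dots\}$ are both sound and, if anything, more careful than the cited source.
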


Further, it was shown in \cite[Lemma 2]{kumar2020zeroth} that the procedure has finite variance. Using the fact that Algorithm \ref{alg:estimateJ} admits an unbiased sample of the value function with finite variance, we can trivially extend the convergence results of \cite[Theorem 2]{vemula2019contrasting} to the infinite horizon problem. 

Again, we reiterate the main advantage of our approach relies on the fact that the rollout data collection never loses its validity even if the policy changes. In other words, to train our parameter critic, we can use the whole collection of pairs parameters-expected returns, with which we will reduce the total variance of the gradient estimator. This approach is particularly useful when sampling pairs is noisy, costly, or time consuming.
\begin{remark} 
The way we obtain rollouts for the cumulative expected returns in Algorithm \ref{alg:estimateJ}, only depends on the parameters $\theta$. In the literature \cite{paternain2018stochastic, zhang2019global, kumar2020zeroth}, rollouts are generally done as a function of a fixed state-action pair.  \end{remark}

\section{Variance Reduction with Function Approximation} \label{sec:function_approx}

Although unbiased samples of the objective function can be obtained, in general they tend to be noisy due to the possible stochasticity of the system, initial starting position, policy, and random horizon length (for infinite horizon problems). Monte-Carlo rollouts are a naive variance reduction technique, however averaging over several rollouts can be very costly in practice. Instead, we draw motivation from actor-critic algorithms, which use function approximation for a \emph{critic} which acts as a variance reduction method. Unlike actor-critic algorithms, which require finding the optimal action-value function (frequently denoted $Q(s,a)$ which represents the expected return starting from $s$ and taking action $a$)  estimate for \emph{each} actor iteration \cite{kumar2019sample}, we propose directly learning the an objective function surrogate $F:\Theta \times \Omega \to \mathbb{R}$ parametrized by $\omega \in \Omega\subset \mathbb{R}^q$ which is learned over \emph{all} the actor iterates. 

While in practice the parametrized surrogate function $F(\theta, \omega)$ can take the form of any function approximator (such as Radial Basis Function (RBF) kernels, neural networks, or linear function approximation, for example) we choose to focus on a simple single-layer neural network for our theoretical guarantees. This network takes the form 
\begin{equation}\label{eqn:neural_network}
    F(\theta,\omega) := \sum_{i = 1}^N c_i\sigma(\langle a_i, \theta \rangle + b_i)+c_0
\end{equation}
where $N$ is the number of neurons in the network and the point wise non-linear function $\sigma : \mathbb{R}\to \mathbb{R}$ is the sigmoid activation function 
\begin{equation}
    \sigma(x) = \frac{1}{1 + e^{-2x}}.
\end{equation}
We let $\omega = (c_0,a_1, b_1, c_1, \dots a_N, b_N, c_N), \omega \in \reals ^{(p+2)N+1}$ denote the collective parameters. The choice of this specific form of neural network is due to well known the global approximation guarantees \cite{barron1993universal,cybenko1989approximation}, though it is important to note that similar results are possible with deep neural networks with fewer neurons\cite{eldan2016power}. Global function approximation theorems state that for any $\varepsilon^* > 0$, with a sufficiently large network size, there exists an $\omega^*$ such that for all $\theta \in \Theta$,
\begin{equation}\label{equ:critic_bound}
    \max_{\theta \in \Theta} |F(\theta,\omega^*)  - J(\theta) | \leq \varepsilon^*.
\end{equation}
Gradient methods may be used to learn the optimal parameter $\omega^*$. In particular, as loss function $\ell(\cdot,\cdot)$, we propose minimizing the mean squared error (MSE) between the function approximator and the stochastic unbiased estimates which would have otherwise been used for the zeroth order parameter update in \eqref{equ:param_update_nocritic}. That is, given a set of parameters $\mathcal{X} = \left\{\theta_1, \theta_2, \dots,\theta_{2T}\right\}$ and stochastic evaluations of the objective function $\mathcal{Y} = \{\hat J(\theta_1), \hat J(\theta_2), \dots \hat J (\theta_{2T})\}$, we solve the empirical risk minimization with square $2$-norm,
\begin{equation} \label{equ:non_convex}
    w_T \in \arg \min_\omega \frac{1}{2T} \sum_{i = 1}^{2T} \ell(F(\theta_i,\omega) , \hat J(\theta_i)).
\end{equation}
Here we sum to $2T$ due to the parameter-objective function evaluation pair collection prescribed by Algorithm \ref{alg:ZOFA}. Although \eqref{equ:non_convex} is NP-hard in the worst case, we downplay its complexity similar to what was done in \cite{zhang2019global}. In particular, \cite{du2018gradient} and \cite{ge2017learning} have shown that networks with the single layer structure we consider satisfy certain geometric properties which enable the solution to be achieved via trust-region methods \cite{sun2015nonconvex}. As such, we let the function approximator update take the form of a gradient update on the MSE loss on the batch $(\mathcal{X}, \mathcal{Y})$ of size $B$
\begin{equation}\label{eqn:gradient_step}
    \omega_{t+1} = \omega_t - \alpha \hat \nabla_\omega \ell (\mathcal{Y}, F(\mathcal{X},\omega_t)),
\end{equation}
where $\alpha>0$ is the function approximator stepsize. At each time $t$, we define the function approximation error to be 
\begin{equation}
\varepsilon_t :=  \sup_\theta|F(\theta, \omega_t)- J(\theta)|.
\end{equation}
Through the use of gradient methods to update the parameter critic, we assume that $\varepsilon_t$ approaches $\varepsilon^*$ as $t$ grows. The specific criterion for convergence of our algorithm depends on the average of the error iterates, which we define by $\bar \varepsilon_T := 1/T \sum_{t =1}^T \varepsilon_t$, as will be shown in the following subsection.

\begin{algorithm}[t]
	\caption{Gradient Free Parameter Critic Learning}
	\begin{algorithmic}	\label{alg:ZOFA}
		\REQUIRE $\theta_0$, $\omega_0$, $  \alpha, \mu, \eta, \mathcal{X} = \left[\right], \mathcal{Y} = \left[\right]$
		\FOR{$t = 0, 1, \dots$}
        \STATE Sample $\mathbf{u}_t \sim \mathcal{N}(0,I_p)$
		\STATE Obtain noisy estimates $\hat J(\theta_t + \mu \mathbf{u}_t)$ and $\hat J(\theta_t - \mu \mathbf{u}_t)$
		\STATE Add $\theta_t + \mu \mathbf{u}_t$ and $\theta_t -\mu \mathbf{u}_t$ to $\mathcal{X}$
		\STATE Add $\hat J(\theta_t + \mu \mathbf{u}_t)$ and $\hat J(\theta_t - \mu \mathbf{u}_t)$ to $\mathcal{Y}$
		\STATE Update function approximator: $$\omega_{t+1} = \omega_t -\alpha \hat{\nabla}_{\omega} \ell(\mathcal{Y},F(\mathcal{X},\omega_T))$$
		\vspace{-5mm}
        \STATE Update actor: $$\theta_{t+1} \leftarrow \theta_t + \eta \frac{F(\theta_t + \mu \mathbf{u}_t,\omega_t) - F (\theta - \mu \mathbf{u}_t,\omega_t)}{2\mu}\mathbf{u}_t$$
		\ENDFOR
	\end{algorithmic}
\end{algorithm}

\subsection{Function Critic Zeroth Order Gradient}

Returning to the RL framework, we seek to obtain the parameters $\theta$ that maximize the objective function $J(\theta)$. The function approximator estimate of the objective function then replaces the zeroth order parameter update from \eqref{equ:param_update_nocritic}, so that the actor parameter update becomes 
\begin{equation}
    \theta_{t+1 } = \theta_t + \eta \frac{F(\theta_t + \mu \mathbf{u}_t, \omega_t) - F(\theta_t - \mu \mathbf{u}_t, \omega_t)}{2\mu} \mathbf{u_t}, \label{eqn:zeroth_update_with_parameter}
\end{equation}
with $\eta > 0$ being the stepsize. The novel zeroth-order RL algorithm with function approximation is shown in Algorithm \ref{alg:ZOFA}. The sets $\mathcal{X}$ and $\mathcal{Y}$ are obtained by following the zeroth-order parameter sequence. That is, at each iteration of the algorithm, the parameters $\theta_t \pm \mu \mathbf{u}$ and the stochastic evaluation $\hat J (\theta_t \pm \mu \mathbf{u})$ are added to the sets $\mathcal{X},\mathcal{Y}$. These samples are \emph{imperishable}, as they can be continually sampled to update the parameter critic throughout the entire algorithm. Eventually, enough samples will be added so that with a sufficiently large batch size, \eqref{equ:non_convex} is obtainable and \eqref{equ:critic_bound} holds.

\begin{figure}[H]
	\centering	
	\includegraphics[scale=1]{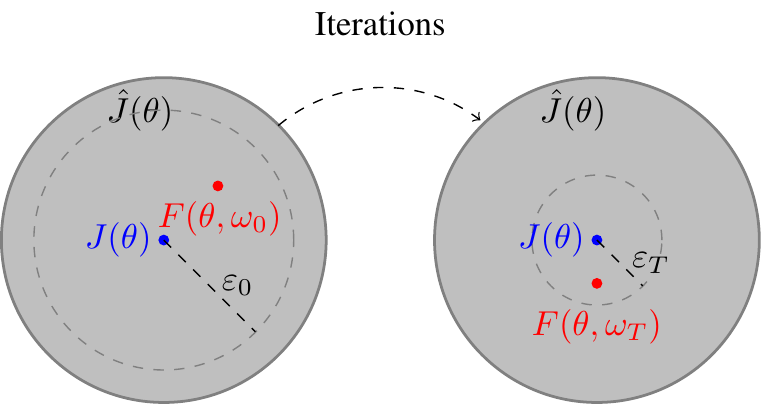}
	\caption{Illustration of the effect of the parameter critic in reducing the noise of the samples.}\label{fig:visualization}
\end{figure}

Figure \ref{fig:visualization} visually depicts the merit of using a global function approximator for zeroth-order parameter space optimization. Toward the beginning of the algorithm, there are not enough sample points in the sets $\mathcal{X}$ and $\mathcal{Y}$, which means that the error of the function approximator estimate may be large, as shown on the left. In contrast, as more samples are made available following the natural progression of the algorithm, the estimate becomes more accurate, as shown on the right. Our algorithm performs well once $\varepsilon$ is sufficiently smaller than the variance of the stochastic estimate.

In fact, the benefit of this method is twofold. Not only is the function estimate always improving, but also a record is maintained of the parameters visited in the past. This means that the algorithm may be reset to a globally optimal point in case a perturbation pushes the parameter outside a region of convergence. We now present our main theoretical result.

\begin{theorem} \label{thm:main}
Let Assumptions \ref{asm:bounded_reward} and \ref{asm:smoothness} be in effect.  Then, when the actor stepsize $\eta = T^{-1/2}$, it is true that the actor parameter sequence following Algorithm \ref{alg:ZOFA} satisfies
$$\frac{1}{T}\sum_{t = 1}^T \mathbb{E}\|\nabla J(\theta_t) \|^2 \leq \mathcal{O}\left( \frac{1}{T^{1/2}} + \frac{\bar{\varepsilon}_T}{\mu} + \mu + \frac{\bar{\varepsilon}_T^2}{T^{1/2}\mu^2}\right)$$
\end{theorem}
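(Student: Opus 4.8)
The plan is to run the standard descent-lemma argument for nonconvex zeroth-order stochastic optimization, but to track carefully how the function-approximation error $\varepsilon_t$ propagates into both the bias and the second moment of the search direction. Write $g_t := \frac{F(\theta_t + \mu\mathbf{u}_t, \omega_t) - F(\theta_t - \mu\mathbf{u}_t, \omega_t)}{2\mu}\mathbf{u}_t$ for the update direction, so that $\theta_{t+1} = \theta_t + \eta g_t$ by \eqref{eqn:zeroth_update_with_parameter}. Since $J$ is $G$-smooth (Assumption \ref{asm:smoothness}) and we are maximizing, the ascent form of the descent lemma gives $J(\theta_{t+1}) \geq J(\theta_t) + \eta\langle \nabla J(\theta_t), g_t\rangle - \tfrac{G\eta^2}{2}\|g_t\|^2$. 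I would then condition on the $\sigma$-algebra $\mathcal{F}_t$ generated by all randomness through the selection of $\theta_t$ and $\omega_t$, so that, since $F(\cdot,\omega_t)$ is deterministic once $\omega_t$ is fixed, the only remaining randomness is the Gaussian perturbation $\mathbf{u}_t$, and take $\mathbb{E}[\,\cdot \mid \mathcal{F}_t]$.

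The heart of the argument is to decompose $g_t$ into a part driven by the true objective and a part driven by the approximation error. Writing $F = J + (F-J)$ and applying the symmetric gradient representation of Lemma \ref{lem:zeroth_order_grad} to both $J$ and $F - J$, one obtains $\mathbb{E}[g_t \mid \mathcal{F}_t] = \nabla J_\mu(\theta_t) + \nabla(F_\mu - J_\mu)(\theta_t,\omega_t)$. The first piece is handled by the standard smoothing estimate $\|\nabla J_\mu(\theta) - \nabla J(\theta)\| = \mathcal{O}(\mu)$ (valid under $G$-smoothness), which, paired with the gradient bound $\beta$ of Assumption \ref{asm:bounded_reward}, contributes the $\mathcal{O}(\mu)$ term. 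For the second piece I would bound the finite-difference representation directly: since $|F(\cdot,\omega_t) - J(\cdot)| \leq \varepsilon_t$ uniformly, each numerator is at most $2\varepsilon_t$ in magnitude, so $\|\nabla(F_\mu - J_\mu)(\theta_t,\omega_t)\| \leq \tfrac{\varepsilon_t}{\mu}\mathbb{E}\|\mathbf{u}_t\| \leq \tfrac{\varepsilon_t\sqrt{p}}{\mu}$, giving the $\mathcal{O}(\bar\varepsilon_T/\mu)$ term after pairing with $\nabla J(\theta_t)$ and averaging. In sum, $\langle \nabla J(\theta_t), \mathbb{E}[g_t\mid\mathcal{F}_t]\rangle \geq \|\nabla J(\theta_t)\|^2 - \mathcal{O}(\mu) - \mathcal{O}(\varepsilon_t/\mu)$.

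Next I would bound the second-moment term $\mathbb{E}[\|g_t\|^2 \mid \mathcal{F}_t]$ via $(a+b)^2 \leq 2a^2 + 2b^2$ applied to the same $J/(F-J)$ split. The $J$-part uses the $L$-Lipschitz property to get $|J(\theta_t+\mu\mathbf{u}_t) - J(\theta_t-\mu\mathbf{u}_t)| \leq 2\mu L\|\mathbf{u}_t\|$, hence a contribution $\mathcal{O}(L^2\mathbb{E}\|\mathbf{u}_t\|^4) = \mathcal{O}(L^2 p^2)$; the $(F-J)$-part is bounded by $\tfrac{\varepsilon_t^2}{\mu^2}\mathbb{E}\|\mathbf{u}_t\|^2 = \mathcal{O}(\varepsilon_t^2 p/\mu^2)$. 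Substituting both bounds into the conditional ascent inequality, taking total expectations, and telescoping $\sum_{t=1}^T(\mathbb{E}J(\theta_{t+1}) - \mathbb{E}J(\theta_t)) \leq 2\mathcal{J}$ by the boundedness of $J$ (Assumption \ref{asm:bounded_reward}), I would divide through by $\eta T$ to isolate $\tfrac{1}{T}\sum_t \mathbb{E}\|\nabla J(\theta_t)\|^2$. This yields $\tfrac{2\mathcal{J}}{\eta T} + \mathcal{O}(\mu) + \mathcal{O}(\bar\varepsilon_T/\mu) + \mathcal{O}(\eta L^2 p^2) + \mathcal{O}(\eta\bar\varepsilon_T^2/\mu^2)$, where the average squared error $\tfrac1T\sum_t\varepsilon_t^2$ is folded into $\bar\varepsilon_T^2$ up to constants. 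Setting $\eta = T^{-1/2}$ collapses the first and fourth terms into $\mathcal{O}(T^{-1/2})$ and the last into $\mathcal{O}(\bar\varepsilon_T^2/(T^{1/2}\mu^2))$, recovering the claimed rate.

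I would flag the second half of the bias step as the main obstacle: the critic is only close to $J$ in value, yet the argument needs closeness of a finite-difference gradient built from it, and dividing by $2\mu$ amplifies the error $\varepsilon_t$ by $1/\mu$. This $1/\mu$ amplification is precisely what forces the $\bar\varepsilon_T/\mu$ and $\bar\varepsilon_T^2/\mu^2$ dependence, and it encodes the tension between small smoothing bias (wanting $\mu$ small) and robustness to approximation error (wanting $\mu$ large). A secondary technical point is justifying the symmetric representation of Lemma \ref{lem:zeroth_order_grad} for the difference $F_\mu - J_\mu$ and verifying the measurability that lets $\omega_t$ be treated as $\mathcal{F}_t$-measurable, so that the conditioning on $\mathbf{u}_t$ is clean.
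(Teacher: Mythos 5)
Your proposal follows essentially the same route as the paper's proof: the ascent form of the descent lemma, a bias bound obtained by adding and subtracting $\nabla J_\mu(\theta_t)$ and controlling the two pieces by the $\varepsilon_t/\mu$ finite-difference amplification and the $\mathcal{O}(\mu)$ smoothing error (the paper's Lemma~\ref{lem:bound_bias}), a second-moment bound from splitting the finite difference into its $J$-part and $(F-J)$-parts (the paper's Lemma~\ref{lem:bound_variance}), the Cauchy--Schwarz pairing with the gradient bound $\beta$, telescoping, and $\eta = T^{-1/2}$. You even reproduce the paper's own small abuse of replacing $\tfrac{1}{T}\sum_t \varepsilon_t^2$ by $\bar\varepsilon_T^2$ (an inequality that Jensen gives in the opposite direction), so there is nothing substantive to distinguish the two arguments.
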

\begin{proof}
See the Appendix.
\end{proof}

Theorem \ref{thm:main} suggests that when $\mu$ is of order $\mathcal{O}(T^{-1/2})$ and $\bar \varepsilon_T $ is of order  $\mathcal{O}(T^{-3/2})$, a $\delta$ convergence to a stationary point is achievable in expectation with $\mathcal{O}(\delta^{-2})$ samples. While it is true that with a large enough network size, there exists a set of weights such that the function will be approximated with $\varepsilon$-accuracy, finding this set of parameters with noisy samples as in \eqref{equ:non_convex} may not be possible. Nevertheless, Theorem \ref{thm:main} provides an intuitive understanding on when a parameter critic will be able to solve the solution. The key comes from a closer inspection of the $\mu + \bar{\varepsilon}_T/\mu$ term. While in theory, we want $\mu$ to be as small as possible, values such as $\mu  = \{.1, .5, 1\}$ are known to not only converge but also lead the agent to meaningful solutions \cite{kumar2020zeroth}. This suggests that in practice, the function approximation average error only needs to be less than the zeroth-order perturbation $\mu$.


\section{Numerical Results} \label{sec:Numericals}
In this section, we show empirical success on the use of a parameter critic. In particular, we first consider a trivial problem with varying sample noise to show where our method becomes useful. Second, we show that our method scales to higher dimensions by solving the canonical cartpole problem.  
\begin{figure*}[t]
	\centering
	\hspace{-1cm}
	\subfigure[$\sigma^2=0.01$.]{
		\centering
		\includegraphics{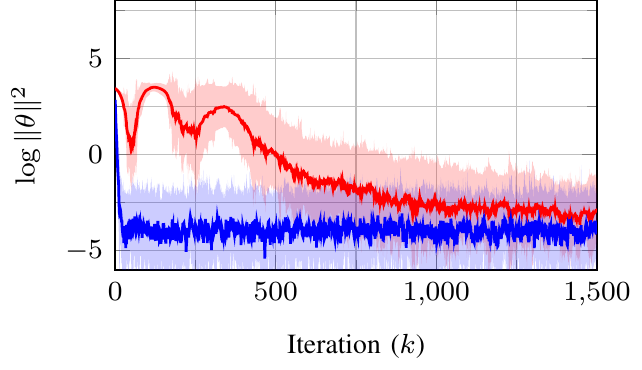}\label{fig:toy01}}
	\hspace{-0.85cm}
	\subfigure[$\sigma^2=0.1$.]{
		\centering
		\includegraphics{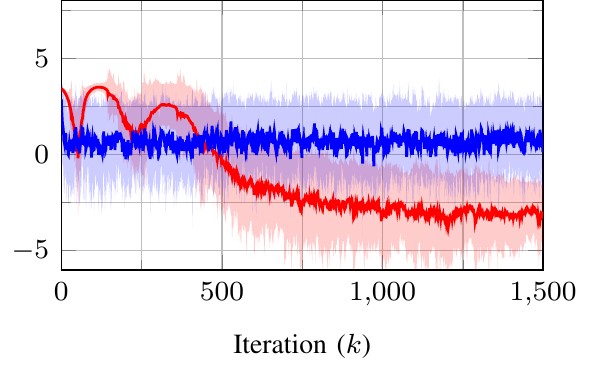} \label{fig:toy1} }
	\hspace{-1cm}	
	\subfigure[$\sigma^2=1$.]{
		\centering
		\includegraphics{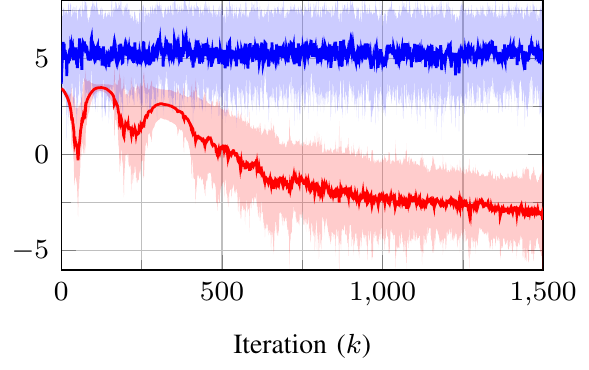} \label{fig:toy10} }
	\caption{Average of $\log \|\theta\|^2$ with confidence bounds over 50 trials with (red) and without (blue) the parameter critic. Learning rate, smoothing parameter, and parameter critic stepsize set to $\eta = 0.1$, $\mu = 0.005$, and $\alpha = 0.005$ respectively. Objective sample noise increases from left to right ($\sigma = \{.01, .1, 1\}$). }
	\label{fig:k_trend}
	\hspace{-1cm}
	\subfigure[Averaged value of the sampled objective function $\hat{J}(\theta)$.]{
		\centering
		\includegraphics{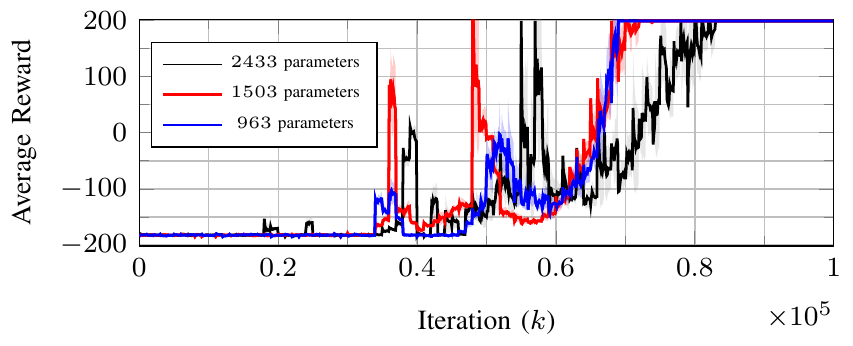}
		\label{fig:cartpole_actor}}
	\subfigure[MSE for the parameter critic for the case of the NN with $963$ parameters.]{
		\centering
		\includegraphics{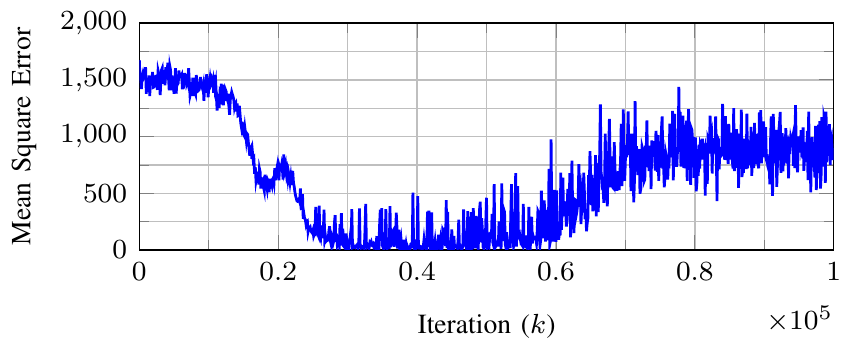}
		\label{fig:cartpole_critic}}
	\label{fig:cartpole}
	\caption{Cartpole simulations for three different actors parameterized by a two layer fully connected NN with $[64,32]$, $[54,22]$, and $[34,22]$ hidden units in the first and second layer respectively. Regarding the parameter critic, a three layer NN is used, with $256$, $128$, and $1$ neuron in the first, second and third layer respectively. }
\end{figure*}

\subsection{When to use a Parameter Critic}
We first consider the trivial problem of minimizing the function 
\begin{equation*}
    J(\theta) = \|\theta\|^2,
\end{equation*}
where $\theta \in \mathbb{R}$. A parameter critic with a single hidden layer of $256$ neurons is learned on noisy samples drawn form a Gaussian distribution $\mathcal{N}(0, \sigma^2)$.  On each before each actor update step, a single sample is randomly drawn from the buffers $\mathcal{X}, \mathcal{Y}$ which is used to update the parameter critic. 

Figure \ref{fig:k_trend} compares the performance of Algorithm \ref{alg:ZOFA} with the finite difference parameter space update without a parameter critic \ref{equ:param_update_nocritic}. Shown in red, the update using the parameter critic convergence to $\theta$ is of the order of $\mathcal{O}(10^{-3})$ regardless of the sample noise. In contrast, the update without a parameter critic only outperforms Algorithm \ref{alg:ZOFA} in the case of figure \ref{fig:toy01} when the sample noise is small ($\sigma = 0.01$).  

We corroborate in practice the convergence of the parameter critic even in a noisy system. The convergence stems from the parameter critic's ability to use samples throughout the whole trajectory, gathering enough data to make the approximation error \eqref{equ:critic_bound} small enough thus reducing the variance of the zeroth order update \eqref{eqn:zeroth_update_with_parameter}. Using a parameter critic can be seen to be particularly valuable when the system is noisy, as only using samples of the system does not converge in practice. 

\subsection{Cartpole}
In this section, we benchmark our framework on a classical control problem, balancing a cartpole. This problem is featured in the OpenAI Gym toolkit \cite{1606.01540}. The state consists of the position and velocity of the cart and the angular position and velocity of the pole. The action space consists on the force applied to the cart. The initial state in randomly drawn, and a terminal episode consists of the angular position of the pole exceeding the limits of $[-12^{\circ},12^{\circ}]$ or the position of the cart $[-2.4,2.4]$. The rollouts consists of $200$ episodes, with a positive reward of $r=+1$ when the system is inside the bounds and $-1$ other wise. 

For both the actor and the parameter critic, two fully connected neural networks (NN) were used. In the case of the actor, the NN consists of two linear layers followed by a hyperbolic tangent. We benchmarked our framework for 3 different NN, with $[64,32]$, $[54,22]$, and $[34,22]$ hidden units in the first and second layers. In the case of the parameter critic, a three layer NN is used, with $256$, $128$, and $1$ neuron in the first, second and third layer respectively. In both cases, we are using ReLu as the non-linearity between layers. In the case of the actor, the parameters used for Algorithm \ref{alg:ZOFA} were actor learning rate $\eta=5^{-5}$, $\mu$-smoothing parameter is $\mu=0.5$ and  critic learning rate $\gamma=10^{-8}$. Due to memory limitations in practice, we resort to a usual practice of only storing the last $B=50$ rollout samples. Besides, in order to reduce the variance of the gradient of the critic $\hat{\nabla}_\omega \ell(\mathcal{Y},F(\mathcal{X},\omega)) $, we are using a mini-batch $b=20$ which is uniformly drawn from the buffer. 





In figure \ref{fig:cartpole_actor} we plot the performance of the agent, namely we plot $\hat{J}(\theta_k)$, with respect to the iteration for three NN of different sizes. Overall, all NN are capable of balancing the cartpole after $80000$ iterations. Furthermore, as expected, the smallest NN achieves the fastest convergence. This can be explained by the fact that a smaller NN for the actor, means a smaller subspace to be learned by the parameter critic.

In figure \ref{fig:cartpole_critic} we evaluate the mean square error between the parameter critic $F(\theta_t,\omega_t)$ and the noisy estimates of the rollouts $\hat{J}(\theta_k)$. In all, there are three distinctive phases in the learning trajectories. The first one takes place between the very beginning and 30 thousand iterations. Here the MSE of the parameter critic is high and the performance of the actor is poor, this suggests that the actor is not able to learn due to the error in the parameter critic. The second phase starts after the first one and ends at around 60 thousands iterations. In this phase the parameter critic overfits the noisy rollout samples $\hat{J}(\theta)$, thus the error is small, but the actor does not learn something meaningful as the zeroth order update is not precise. The last phase, starts at 60 thousand iterations and takes place once the parameter critic successfully learns the dependence between the objective function $J$ and the parameters $\theta$. The performance in the actor increases until it converges to $200$. Notice that the MSE for the parameter critic remains still, this is due to the noise in the samples, in statistics this error sometimes receives the name of irreducible error. 


\section{Future Directions and Conclusion} \label{sec:conclusion}

In this work, we introduced a variance reduction technique in which a function approximator is learned on the parameter space of an agent's policy. By interactions with the environment, samples of the performance of a policy are obtained and a parameter critic learns their dependence. We showed both theoretically and in practice how this approach is particularly useful when the system is noisy and rollouts are costly to obtain. Being able to evaluate the objective function directly has extensions beyond black-box reinforcement learning. In particular, constrained reinforcement learning is solvable through primal dual methods, though they require evaluations of the objective function in the dual update step \cite{paternain2019constrained}. These methods will also perform well in multi-agent reinforcement learning with Graph Neural networks, where the size of the network is smaller compared to the state space of the problem. Our method can also be extended to work in parallel with multiple agents communicating to obtain a better estimate of the parameter critic. We leave these avenues for future work.

\bibliographystyle{ieeetr}
\bibliography{bib-Nav_fun}

\appendix
\subsection{Auxiliary Lemmas and proofs}


\begin{lemma} \label{lem:bound_bias} Let Assumptions \ref{asm:bounded_reward} and \ref{asm:smoothness} be in effect. Then, it is true that 
$$\|\nabla_\theta F_{\mu}(\theta_t,\omega_t) - \nabla J (\theta_t)\| \leq \varepsilon_t \mu^{-1}\sqrt{p}  + G\mu \sqrt{p}, $$ with $$\nabla_\theta F_{\mu}(\theta_t,\omega_t) := \mathbb{E}_\mathbf{u}\left[\frac{F(\theta_t + \mu \mathbf{u},\omega_t) - F (\theta_t - \mu \mathbf{u},\omega_t)}{2\mu}\mathbf{u}\right].$$
\end{lemma}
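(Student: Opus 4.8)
The plan is to introduce the smoothed true gradient $\nabla_\theta J_\mu(\theta_t)$ as an intermediate quantity and split the error by the triangle inequality into a \emph{function-approximation} term and a \emph{smoothing-bias} term:
$$\|\nabla_\theta F_\mu(\theta_t,\omega_t) - \nabla J(\theta_t)\| \le \underbrace{\|\nabla_\theta F_\mu(\theta_t,\omega_t) - \nabla_\theta J_\mu(\theta_t)\|}_{(\mathrm{I})} + \underbrace{\|\nabla_\theta J_\mu(\theta_t) - \nabla J(\theta_t)\|}_{(\mathrm{II})}.$$
I would then aim to show $(\mathrm{I}) \le \varepsilon_t \mu^{-1}\sqrt{p}$ and $(\mathrm{II}) \le G\mu\sqrt{p}$, which together give the claim.

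For $(\mathrm{I})$, first I would invoke Lemma~\ref{lem:zeroth_order_grad} in its symmetric form \eqref{equ:symm}, so that $\nabla_\theta J_\mu(\theta_t)$ is written as exactly the same two-point finite difference that defines $\nabla_\theta F_\mu(\theta_t,\omega_t)$, but with $J$ in place of $F$. Subtracting the two integrands and using linearity of the expectation, the difference becomes $\mathbb{E}_\mathbf{u}\big[\tfrac{1}{2\mu}\big((F-J)(\theta_t+\mu\mathbf{u}) - (F-J)(\theta_t - \mu\mathbf{u})\big)\mathbf{u}\big]$. Pulling the norm inside the expectation (Jensen) and applying the triangle inequality, each approximation error $|F(\cdot,\omega_t)-J(\cdot)|$ is bounded by the uniform error $\varepsilon_t := \sup_\theta|F(\theta,\omega_t)-J(\theta)|$; this yields $(\mathrm{I}) \le \tfrac{\varepsilon_t}{\mu}\,\mathbb{E}_\mathbf{u}\|\mathbf{u}\|$. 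Finally, since $\mathbf{u}\sim\mathcal{N}(0,I_p)$, Jensen gives $\mathbb{E}_\mathbf{u}\|\mathbf{u}\| \le (\mathbb{E}_\mathbf{u}\|\mathbf{u}\|^2)^{1/2} = \sqrt{p}$, completing the bound on $(\mathrm{I})$.

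For $(\mathrm{II})$, I would instead use the representation $\nabla_\theta J_\mu(\theta_t) = \mathbb{E}_\mathbf{u}[\nabla J(\theta_t + \mu\mathbf{u})]$, obtained by differentiating $J_\mu(\theta)=\mathbb{E}_\mathbf{u}[J(\theta+\mu\mathbf{u})]$ under the expectation (justified by Assumption~\ref{asm:smoothness}, which makes $\nabla J$ globally well-defined and Lipschitz). Then $(\mathrm{II}) = \|\mathbb{E}_\mathbf{u}[\nabla J(\theta_t+\mu\mathbf{u}) - \nabla J(\theta_t)]\| \le \mathbb{E}_\mathbf{u}\|\nabla J(\theta_t+\mu\mathbf{u}) - \nabla J(\theta_t)\|$, and $G$-smoothness bounds the integrand by $G\|\mu\mathbf{u}\| = G\mu\|\mathbf{u}\|$. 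Using $\mathbb{E}_\mathbf{u}\|\mathbf{u}\|\le\sqrt{p}$ once more gives $(\mathrm{II})\le G\mu\sqrt{p}$, and summing the two bounds delivers the lemma.

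Each step is short, so the main obstacle is conceptual rather than computational: one must use the \emph{correct} representation of $\nabla_\theta J_\mu$ in each term. The symmetric finite-difference form of Lemma~\ref{lem:zeroth_order_grad} is what makes term $(\mathrm{I})$ align with the definition of $\nabla_\theta F_\mu$, whereas the gradient-inside-the-expectation form is what yields the clean $G\mu\sqrt{p}$ smoothing-bias bound rather than a looser $\mathcal{O}(p^{3/2})$ estimate. Justifying the interchange of $\nabla$ and $\mathbb{E}$, and handling the possibility that $\theta_t\pm\mu\mathbf{u}$ leaves $\Theta$, are the only genuine technical points, and both are covered by the global differentiability asserted in Assumption~\ref{asm:smoothness}.
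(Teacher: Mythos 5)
Your proposal is correct and follows essentially the same route as the paper: add and subtract $\nabla_\theta J_\mu(\theta_t)$, bound the function-approximation term via the symmetric representation \eqref{equ:symm} and the uniform error $\varepsilon_t$, and bound the smoothing bias via $\nabla_\theta J_\mu(\theta_t)=\mathbb{E}_\mathbf{u}[\nabla J(\theta_t+\mu\mathbf{u})]$ together with $G$-smoothness and $\mathbb{E}_\mathbf{u}\|\mathbf{u}\|\le\sqrt{p}$. If anything, your write-up is slightly more careful than the paper's about justifying the interchange of gradient and expectation and about tracking the factor of $2$ in the numerator of term $(\mathrm{I})$.
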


\begin{proof}
We can start by writing down a tautological equality, and then add and subtract $\nabla_{\theta}J_\mu(\theta_t)$ to the right hand side,
\begin{align*}
\nabla_\theta F_{\mu}(\theta_t,\omega_t) - \nabla J (\theta_t)=&\nabla_\theta F_{\mu}(\theta_t,\omega_t)-\nabla_{\theta}J_\mu(\theta_t)  \\ &+\nabla_{\theta}J_\mu(\theta_t) - \nabla J (\theta_t).
\end{align*}
    
We can take norms, apply the triangle inequality, and by Lemma \ref{lem:zeroth_order_grad}, we obtain
\begin{equation*}
\begin{split}
&||\nabla_\theta F_{\mu}(\theta_t,\omega_t) - \nabla J (\theta_t)|| \\
&\leq||\nabla_\theta F_{\mu}(\theta_t,\omega_t)-\nabla_{\theta}J_\mu(\theta_t) || \\ & +||\nabla_{\theta}J_\mu(\theta_t) - \nabla J (\theta_t)||\\
&\leq\Big| \Big| \mathbb{E}_\mathbf{u}\left[\frac{F(\theta_t + \mu \mathbf{u},\omega_t) - F (\theta_t - \mu \mathbf{u},\omega_t)}{2\mu}\mathbf{u}\right] \\
&- \mathbb{E}_{\mathbf{u}}\left[ \frac{J(\theta + \mu \mathbf{u}) - J(\theta - \mu \mathbf{u})}{2\mu} \mathbf{u} \right] \Big| \Big| \\
&+|| \mathbb{E}_{\mathbf{u}}\left[ \nabla_{\theta}J(\theta_t+\mu \mathbf{u}) - \nabla J (\theta_t) \right]||.
\end{split}
\end{equation*}
Notice that this last line holds by the Monotone Convergence Theorem. We then apply the Jensen's inequality on the norm function, the $G$-smooth Assumption \ref{asm:smoothness}, function approximation bound $\varepsilon_t$ according to \eqref{equ:critic_bound}, and the fact that $\mathbb{E}_\textbf{u}[\|\textbf{u}\|] \leq \sqrt{p}$ as $\textbf{u}\sim \mathcal{N}(\mathbf{0},I_p)$, to obtain, 
\begin{align*}
||\nabla_\theta F_{\mu}(\theta_t,\omega_t) - \nabla J (\theta_t)|| \leq&\Big ( \frac{\varepsilon_t} {\mu} +G \mu \Big ) \mathbb{E}_\mathbf{u}\left[\Big| \Big| \mathbf{u} \Big| \Big|\right] \\
\leq& \varepsilon_t \mu^{-1}\sqrt{p}  + G\mu \sqrt{p}. 
\end{align*}
\end{proof}

\begin{lemma} \label{lem:bound_variance} Let Assumptions \ref{asm:bounded_reward} and \ref{asm:smoothness} be in effect. Then it is true that 
\begin{equation} \label{equ:lem3}
    \begin{split}
        \mathbb{E}\Bigg\|\frac{F(\theta_t + \mu \mathbf{u}_t, \omega_t) - F(\theta_t - \mu \mathbf{u}_t,\omega_t)}{2\mu}&\mathbf{u_t}\Bigg\|^2 \\\leq \frac{10}{3}\left(Lp^2+\frac{\varepsilon_t^2 p}{\mu^2}\right).\\
    \end{split}
\end{equation}
\end{lemma}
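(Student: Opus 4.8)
The plan is to bound the integrand pointwise in the random direction $\mathbf{u}_t$ and then integrate against the Gaussian, so that the whole estimate reduces to the second and fourth moments of $\|\mathbf{u}_t\|$. First I would factor out the scalar increment,
\begin{equation*}
\left\|\frac{F(\theta_t + \mu \mathbf{u}_t, \omega_t) - F(\theta_t - \mu \mathbf{u}_t,\omega_t)}{2\mu}\mathbf{u}_t\right\|^2 = \frac{\big(F(\theta_t + \mu \mathbf{u}_t, \omega_t) - F(\theta_t - \mu \mathbf{u}_t,\omega_t)\big)^2}{4\mu^2}\,\|\mathbf{u}_t\|^2,
\end{equation*}
which decouples the estimate into a squared finite difference of $F$ and the factor $\|\mathbf{u}_t\|^2$.

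The core step is to control the finite difference of $F$ by passing through the true objective $J$. Adding and subtracting $J(\theta_t \pm \mu \mathbf{u}_t)$ gives
\begin{align*}
F(\theta_t + \mu \mathbf{u}_t, \omega_t) - F(\theta_t - \mu \mathbf{u}_t,\omega_t) =& \big(F(\theta_t + \mu \mathbf{u}_t, \omega_t) - J(\theta_t + \mu \mathbf{u}_t)\big) \\ &+ \big(J(\theta_t + \mu \mathbf{u}_t) - J(\theta_t - \mu \mathbf{u}_t)\big) \\ &- \big(F(\theta_t - \mu \mathbf{u}_t,\omega_t) - J(\theta_t - \mu \mathbf{u}_t)\big).
\end{align*}
The two outer terms are each at most $\varepsilon_t$ by the approximation bound \eqref{equ:critic_bound}, and the middle term is at most $2L\mu\|\mathbf{u}_t\|$ by the $L$-Lipschitz property of Assumption \ref{asm:smoothness}, since the arguments differ by $2\mu\mathbf{u}_t$. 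Hence $|F(\theta_t + \mu \mathbf{u}_t, \omega_t) - F(\theta_t - \mu \mathbf{u}_t,\omega_t)| \leq 2\varepsilon_t + 2L\mu\|\mathbf{u}_t\|$.

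Then I would square this bound with the elementary inequality $(a+b)^2 \leq 2a^2 + 2b^2$ to get $8\varepsilon_t^2 + 8L^2\mu^2\|\mathbf{u}_t\|^2$, substitute into the factored expression, and divide by $4\mu^2$, which yields the pointwise bound
\begin{equation*}
\left\|\frac{F(\theta_t + \mu \mathbf{u}_t, \omega_t) - F(\theta_t - \mu \mathbf{u}_t,\omega_t)}{2\mu}\mathbf{u}_t\right\|^2 \leq \frac{2\varepsilon_t^2}{\mu^2}\|\mathbf{u}_t\|^2 + 2L^2\|\mathbf{u}_t\|^4.
\end{equation*}
Taking the expectation over $\mathbf{u}_t \sim \mathcal{N}(0,I_p)$ and using the Gaussian moments $\mathbb{E}\|\mathbf{u}_t\|^2 = p$ and $\mathbb{E}\|\mathbf{u}_t\|^4 = p(p+2)$ produces $2\varepsilon_t^2 p/\mu^2 + 2L^2 p(p+2)$. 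Finally, bounding $p(p+2) \leq \tfrac{5}{3}p^2$ (valid for $p \geq 3$) and noting $2 \leq \tfrac{10}{3}$ collects both contributions under the common prefactor $\tfrac{10}{3}$, giving the claimed estimate.

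The computation is essentially mechanical once the decomposition is in place; the one spot that needs care is the fourth moment $\mathbb{E}\|\mathbf{u}_t\|^4 = p(p+2)$, which follows from the mean $p$ and variance $2p$ of the chi-squared law of $\|\mathbf{u}_t\|^2$, together with the choice of constant in the Young-type split so that the two terms assemble into the single factor $\tfrac{10}{3}$. Tracking the Lipschitz factor through the square also shows the $L$-dependence in the bound is naturally quadratic, consistent with the squared Lipschitz estimate used above.
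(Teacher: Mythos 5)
Your proof is correct and follows essentially the same route as the paper's: the same add-and-subtract of $J(\theta_t\pm\mu\mathbf{u}_t)$, the Lipschitz bound on the middle $J$-difference, the $\varepsilon_t$ bound on the two $F-J$ residuals, and Gaussian moments of $\|\mathbf{u}_t\|$; you merely perform the split at the scalar level with constant $2$ rather than at the vector level with $10/3$. If anything your accounting is more careful: you correctly use $\mathbb{E}\|\mathbf{u}_t\|^4=p(p+2)\le\tfrac{5}{3}p^2$ (for $p\ge 3$) where the paper loosely invokes $\mathbb{E}\|\mathbf{u}_t\|^2=p^2$, and your quadratic $L^2$-dependence agrees with the paper's own final display (the bare $L$ in the lemma statement appears to be a typo for $L^2$).
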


\begin{proof}
For brevity in notation, let $\hat \nabla_\theta F_\mu(\theta_t, \omega_t)$ stand for the fradient estimate at time $t$ given on left hand side of \eqref{equ:lem3} inside the norm squared. Add and subtract $J(\theta_t + \mu \mathbf{u}) + J(\theta_t - \mu \mathbf{u})$, and rearrange to obtain
\begin{equation*}
\begin{split}
    \hat \nabla_\theta F_{\mu}(\theta_t,\omega_t) &= \frac{J(\theta_t + \mu \mathbf{u}_t) - J (\theta_t - \mu \mathbf{u}_t)}{2\mu}\mathbf{u}_t\\
    & + \frac{F(\theta_t + \mu \mathbf{u}_t,\omega_t) - J(\theta_t + \mu \mathbf{u}_t)}{2\mu}\mathbf{u}_t\\
    &+ \frac{ J(\theta_t - \mu \mathbf{u}_t) -F(\theta_t - \mu \mathbf{u}_t,\omega_t) }{2\mu}\mathbf{u}_t.
\end{split}
\end{equation*}
Take the norm squared of both sides and apply $\|a + b + c\|^2 \leq 10\left( \| a\|^2 + \|b\|^2 + \|c\|^2 \right)/3$ to obtain
\begin{equation*}
\begin{split}
    \|\hat \nabla_\theta F_{\mu}(\theta_t,&\omega_t)\|^2  \\
    &\leq \frac{10}{3}\left\|\frac{J(\theta_t + \mu \mathbf{u}_t) - J (\theta_t - \mu \mathbf{u}_t)}{2\mu}\mathbf{u}_t\right\|^2 \\
    & + \frac{10}{3}\left\|\frac{F(\theta_t + \mu \mathbf{u}_t, \omega_t) - J(\theta_t + \mu \mathbf{u}_t)}{2\mu}\mathbf{u}_t\right\|^2\\
    &+\frac{10}{3} \left\| \frac{ J(\theta_t - \mu \mathbf{u}_t) -F(\theta_t - \mu \mathbf{u}_t,\omega_t) }{2\mu}\mathbf{u}_t\right\|^2.
    \end{split}
\end{equation*}
Take the expectation of both sides. Finally, we bound the expected value of the square of the norm of the normal vector $\mathbb{E}[\|\mathbf{u}\|^2]=p^2$, and we bound the first term by Lipschitz Assumption \ref{asm:smoothness}, and the second and third terms by function appproximation error $\varepsilon_t$ according to \ref{equ:critic_bound} to obtain the final result,
\begin{equation*}
    \mathbb{E}\| \hat \nabla_\theta F_{\mu}(\theta_t,\omega_t)\|^2  \leq \frac{10}{3}\left(L^2p^2+\frac{\varepsilon_t^2 p}{2\mu^2}\right).
\end{equation*}
\end{proof}

Now we are ready to prove the main theorem.

\subsection{Proof of Theorem \ref{thm:main}} 

Using the assumption that $J(\theta)$ is $L$-smooth, we have 
\begin{equation*}
 J(\theta_{t+1}) \geq J(\theta_t) + (\theta_{t+1} - \theta_t)^\top \nabla J(\theta_t) - L\|\theta_{t+1}- \theta_t\|^2.
\end{equation*}
Let $\mathbb{E}_t$ denote the conditional expectation taken relative to history up to iteration $t-1$. Further, let $\eta \nabla_\theta F_{\mu}(\theta_t,\omega_t)$ denote the expected update $\theta_{t-1} - \theta_t$ as shown in algorithm \ref{alg:ZOFA}. Then, we obtain
\begin{align*}
    \mathbb{E}_t &\left[J(\theta_{t+1})\right] \geq  J(\theta_t) +\eta (\nabla_\theta F_{\mu}(\theta_t,\omega_t))^\top \nabla J(\theta_t) \\
    &- L \eta^2 \mathbb{E}_t\left\|\frac{F(\theta_t + \mu \mathbf{u}_t, \omega_t) - F(\theta_t - \mu \mathbf{u}_t,\omega_t)}{2\mu}\mathbf{u_t}\right\|^2.
\end{align*}
Add and subtract $\eta \nabla J(\theta_t)^\top \nabla J(\theta)$ and apply Lemma \ref{lem:bound_variance} to obtain
\begin{align*}
    \mathbb{E}_t &\left[J(\theta_{t+1})\right] \geq\\
    &J(\theta_t)+ \eta \|\nabla J(\theta_t)\|^2 - L \eta^2 \frac{10}{3}\left(L^2 p^2 + \frac{\varepsilon_t^2p}{2\mu^2}\right)  \\
    &+\eta (\nabla_\theta F_{\mu}(\theta_t,\omega_t) - \nabla J (\theta_t))^\top \nabla J(\theta_t).
\end{align*}
%
%
Using Cauchy-Schwartz and the fact that $J(\theta)$ is $G$-Lipshitz, and Assumption \ref{asm:bounded_reward}, to bound  $\left(\nabla F_\mu(\theta_t, \omega_t) - \nabla J(\theta_t)\right)^\top \nabla J(\theta_t)$ by $-\beta\|\nabla F(\theta_t, \omega_t) - \nabla J(\theta_t)\|$ to obtain
\begin{align*}
    \mathbb{E}_t &\left[J(\theta_{t+1})\right] \geq\\
    &J(\theta_t)+ \eta \|\nabla J(\theta_t)\|^2 - L \eta^2 \frac{10}{3}\left(L^2 p^2 + \frac{\varepsilon_t^2p}{2\mu^2}\right)  \\
    &-\eta \beta\|\nabla_\theta F_{\mu}(\theta_t,\omega_t) - \nabla J (\theta_t)\|.
\end{align*}
%
%
Take the total expectation and invoke Lemma \ref{lem:bound_bias}  to obtain 
\begin{equation*}
    \begin{split}
    \mathbb{E} \left[J(\theta_{t+1})\right] &\geq J(\theta_t) - \eta \beta \sqrt{p}\left(\varepsilon_t\mu^{-1} + G\mu \right)\\
    & + \eta \mathbb{E}\|\nabla J(\theta_t)\|^2 - L \eta^2 \frac{10}{3}\left(L^2 p^2 + \frac{\varepsilon_t^2p}{2\mu^2}\right) .
    \end{split}
\end{equation*}
Reorder terms and sum until $T$ and divide by $1/(\eta T)$
\begin{equation*}
    \begin{split}
        \frac{1}{T} \sum_{t = 1}^{T} \mathbb{E} \|\nabla J (\theta_t)\|^2 & \leq \frac{J(\theta_0) - \mathbb{E}\left[J(\theta_T)\right]}{\eta T}  + 2\beta\bar{\varepsilon}_T\mu^{-1}\sqrt{p} \\
         & +G \beta\mu + \frac{10}{3}L^3 \eta p^2 + \frac{10}{3}L\eta \bar{\varepsilon}^2_T p \mu^{-2}.
    \end{split}
\end{equation*}
Bound $J(\theta_0) - \mathbb{E}\left[J(\theta_T)\right]$ by $2\mathcal{J}$ by Assumption \ref{asm:bounded_reward}, define $\bar{\varepsilon}_T=\sum_{t=1}^T \varepsilon_t/T$. Set $\eta = T^{-1/2}$ to conclude the proof.
\end{document}